\documentclass[11pt]{article}
\usepackage{amsmath,amssymb,amsthm,verbatim}

\title{A Monotone Function Given by a Low-Depth Decision Tree that is not an Approximate Junta}
\author{Daniel M. Kane}

\newcommand{\floor}[1]{\left\lfloor #1 \right\rfloor}

\newcommand{\pr}{\textrm{Pr}}

\newcommand{\E}{\mathbb{E}}

\newcommand{\poly}{\textrm{poly}}

\newcommand{\s}{\mathcal{S}}
\newcommand{\expref}[2]{#1 \ref{#2}}

\newtheorem{theorem}{Theorem}
\newtheorem{proposition}[theorem]{Proposition}

\newtheorem{lemma}[theorem]{Lemma}
\newtheorem{conjecture}[theorem]{Conjecture}

\newtheorem*{observation}{Observation}

\begin{document}
\maketitle

\section{Introduction}

In \cite{learningmonotone}, O'Donnell and Servedio show that any monotone function given by a depth-$d$ decision tree can be learned to constant accuracy from random samples in $\poly(n,2^d)$ time.  The impact of this result is somewhat lessened by an apparent lack of interesting monotone functions given by low-depth decision trees.  In particular, it was independently suggested by Elad Verbin and by Rocco Servedio and Li-Yang Tan, that all such functions might be approximated by functions on few variables (see \cite{Simons}, page 10).

\begin{conjecture}\label{TVConj}
For every $\epsilon>0$ and every monotone function $f:\{0,1\}^n\rightarrow \{0,1\}$ given by a depth-$d$ decision tree, there is a $k$-junta, $g$, for $k=\poly_\epsilon(d)$ so that $f$ and $g$ agree on all but an $\epsilon$-fraction of inputs.
\end{conjecture}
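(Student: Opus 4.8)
The plan is to attack Conjecture~\ref{TVConj} through the Fourier-analytic junta machinery, using the special structure of monotone functions to try to push Friedgut's exponential bound down to a polynomial one. First I would record the two facts that make low-depth decision trees tractable: a function computed by a depth-$d$ decision tree has total influence $\Inf[f]=\sum_i \Inf_i[f]\le d$, since each input bit is queried along a root-to-leaf path of length at most $d$; and for a \emph{monotone} $f$ the individual influences coincide with the degree-one Fourier coefficients, $\Inf_i[f]=\hat f(\{i\})$.

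Next, for a threshold $\tau$ to be chosen, let $J=\{i:\Inf_i[f]\ge\tau\}$ be the set of ``important'' coordinates. Because the influences sum to at most $d$, we immediately get $|J|\le d/\tau$, so $J$ has only polynomially many variables once $\tau$ is a fixed inverse polynomial in $d$. I would then define the candidate junta $g$ by rounding the restriction of $f$ to $J$, setting $g(x)$ to be the more probable value of $f(x_J,y_{\bar J})$ over a uniform random choice of the unimportant coordinates $y_{\bar J}$. The goal is to show $\pr[f\ne g]\le\epsilon$, which by the usual argument reduces to bounding the Fourier mass of $f$ supported on sets that meet $\bar J$, namely $\sum_{S\not\subseteq J}\hat f(S)^2$.

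The crux is therefore to prove that this off-junta Fourier weight is small while $|J|$ stays polynomial in $d$. Friedgut's theorem bounds it but pays a factor $2^{O(d/\epsilon)}$ in the junta size, so the entire burden is to extract a polynomial saving from monotonicity. The approach I would pursue is to combine hypercontractivity and the level-$k$ inequality with the decision-tree bound $\sum_S |S|\,\hat f(S)^2\le d$ and with monotone-specific identities: for instance, I would try to show that for a monotone tree function the influence budget of $d$ cannot be ``spread out'' into too many coordinates of comparable size, perhaps via an entropy or leaf-counting argument over the $2^d$ leaves, since each leaf fixes at most $d$ coordinates and so in some averaged sense only $\poly(d)$ coordinates can be genuinely live.

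I expect this last step to be the genuine obstacle, and it is where I would focus almost all the effort. The Friedgut bound is tight for generic functions, and monotonicity alone does not obviously forbid a depth-$d$ tree whose influence budget is split among super-polynomially many coordinates each of influence roughly $1/\poly(d)$; if such a function also fails to be captured by any small set of its heaviest coordinates, the projection argument collapses. Concretely, I would try to rule this out by arguing that a monotone depth-$d$ tree with many low-influence coordinates must have most of its Fourier weight at degree $\le O(d)$ and hence be approximable by the degree-truncation of its restriction to $J$; but I anticipate that controlling the interaction between the many small coordinates and the tree's leaf structure will resist a polynomial bound, and that any attempted counting argument will leave room for exactly the kind of ``diffuse'' monotone tree on which no projection onto heavy coordinates can succeed.
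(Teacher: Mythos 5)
You cannot complete this proof, because the statement is false: the paper in which this conjecture appears is precisely its disproof. Theorem~\ref{MainThm} constructs, for every $d$, a monotone function computed by a depth-$d$ decision tree that disagrees with \emph{every} $k$-junta on at least an $\epsilon$-fraction of inputs unless $k=\exp(\Omega(\sqrt{d}))$, which is super-polynomial in $d$. Notably, you identified the fatal scenario yourself in your final paragraph --- a monotone depth-$d$ tree whose influence is spread over super-polynomially many coordinates, each too small to enter any heavy set $J$, yet collectively indispensable --- and the paper's construction realizes exactly this. Take $t=\Theta(\sqrt{d})$ and $m=\Theta(2^t)$ random $t$-subsets $S_1,\ldots,S_m$ of the $x$-coordinates, let $T(x)$ be the set of indices $i$ with $x_j=1$ for all $j\in S_i$, and define $f=1$ if $|T(x)|\geq 2$, $f=0$ if $T(x)=\emptyset$, and $f=y_i$ if $T(x)=\{i\}$ (Talagrand's function spliced with the monotone addressing function; this is monotone and has a depth-$d$ tree since fixing $x$ leaves at most one live coordinate). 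Each addressing variable $y_i$ has influence at most $\pr(i\in T(x))=2^{-t}=\exp(-\Omega(\sqrt{d}))$, so for any threshold $\tau$ that is inverse-polynomial in $d$, your heavy set $J$ contains none of the $y_i$ at all. But Lemma~\ref{TSizeLem} shows (by a second-moment computation over the random choice of the $S_i$) that with constant probability $\pr_x(|T(x)|=1)=\Omega(1)$; and conditioned on $T(x)=\{i\}$ with $y_i$ outside the junta, $g$ is constant in $y_i$ while $f=y_i$ is a fair coin, so $g$ errs with probability exactly $1/2$. Lemma~\ref{finalApproxLem} packages this as $\pr(f\neq g)\geq \left(\pr_x(|T(x)|=1)-k2^{-t}\right)/2$, which is bounded below by a positive constant for every $k$-junta with $k=o(2^t)$. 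Your projection-onto-$J$ construction of $g$ applied to this $f$ simply returns (an approximation of) the two-sided threshold on $|T(x)|$, and it is wrong with probability about $\frac{1}{2}\pr_x(|T(x)|=1)=\Omega(1)$, no matter how the rounding is done.

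A secondary quantitative remark explains why the ``polynomial saving from monotonicity'' you hoped to extract in the crux step cannot exist. O'Donnell and Servedio show that a monotone depth-$d$ tree has total influence $O(\sqrt{d})$ (strictly better than your generic bound $\Inf[f]\leq d$), and Friedgut's theorem then gives an $\epsilon$-approximating junta of size $\exp(O(\sqrt{d}/\epsilon))$; the construction above shows $\exp(\Omega(\sqrt{d}))$ is necessary. So in this setting Friedgut's exponential dependence on $I(f)/\epsilon$ is tight up to the constant in the exponent, and no refinement of the hypercontractive or level-$k$ machinery restricted to monotone tree functions can replace it by a polynomial. The only salvageable portion of your write-up is the honest assessment at the end: the ``diffuse'' monotone tree you feared is not merely a gap in the argument but a genuine counterexample to the conjecture itself.
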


In this note, we disprove the above conjecture, and in particular provide an example of a monotone low-degree function that is not well approximated by any small junta.  In particular we prove:

\begin{theorem}\label{MainThm}
There exists a constant $\epsilon>0$ so that for every positive integer $d$, there exists a $k=\exp(\Omega(\sqrt{d}))$ and a monotone function $f:\{0,1\}^n\rightarrow\{0,1\}$ given by a depth-$d$ decision tree, so that for every $k$-junta $g$, $f$ and $g$ disagree on at least an $\epsilon$-fraction of inputs.
\end{theorem}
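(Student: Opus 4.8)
The plan is to reduce the theorem to a single clean analytic task and then build a function meeting it. For a coordinate set $S$, the optimal approximator of $f$ by a function of $x_S$ is $g^*(x)=\mathbf 1[\E[f\mid x_S]\ge 1/2]$, so any $k$-junta $g$, depending on some $S$ with $|S|\le k$, satisfies
$$ \pr[f\neq g]\ \ge\ \pr[f\neq g^*]\ =\ \E_{x_S}\big[\min(p_{x_S},1-p_{x_S})\big]\ \ge\ \E_{x_S}\big[\var(f\mid x_S)\big], $$
where $p_{x_S}=\E[f\mid x_S]$ and $\min(p,1-p)\ge p(1-p)=\var(f\mid x_S)$. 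By the law of total variance and Parseval, $\E_{x_S}[\var(f\mid x_S)]=\var(f)-\sum_{\emptyset\ne T\subseteq S}\hat f(T)^2\ge \var(f)-\sum_{i\in S}\Inf_i(f)$, since $\sum_{i\in S}\Inf_i(f)=\sum_T|T\cap S|\,\hat f(T)^2$ counts each nonempty $T\subseteq S$ at least once. Hence $\pr[f\ne g]\ge \var(f)-k\cdot\max_i\Inf_i(f)$, and it suffices to build a monotone depth-$d$ decision tree $f$ with (i) $\var(f)=\Omega(1)$ and (ii) $\max_i\Inf_i(f)\le \exp(-\Omega(\sqrt d))$; then $\epsilon=\var(f)/2$ and $k=\var(f)/(2\max_i\Inf_i(f))=\exp(\Omega(\sqrt d))$ give the theorem.

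The two facts (i) and (ii) are easy given enough spread. For any depth-$d$ decision tree, a coordinate $i$ can be pivotal at an input $x$ only if $x_i$ is queried along the root-to-leaf path taken by $x$; hence the expected number of pivotal coordinates, which equals $\sum_i\Inf_i(f)$, is at most $d$. If $f$ is invariant under a group acting transitively on its $n$ relevant variables, all influences are equal, so $\max_i\Inf_i(f)=\big(\sum_i\Inf_i(f)\big)/n\le d/n$; thus (ii) follows once $n\ge d\cdot\exp(\Omega(\sqrt d))=\exp(\Omega(\sqrt d))$, and (i) follows by arranging $\pr[f=1]$ to be bounded away from $0$ and $1$.

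It therefore remains to exhibit a transitive (hence influence-balanced), monotone, roughly balanced $f$ whose every minterm and every maxterm has size $O(\sqrt d)$ — which forces decision-tree depth $O(d)$ via the standard bound $D(f)\le C_0(f)\cdot C_1(f)$, after rescaling so the depth is at most $d$ — yet which depends on $n=\exp(\Omega(\sqrt d))$ variables. I would build $f$ from a symmetric combinatorial design in which both minterms and maxterms are short blocks of size $\Theta(\sqrt d)$ of a self-dual, vertex-transitive set system, recalling that every minterm must meet every maxterm (a minterm set to $1$ and a maxterm set to $0$ cannot coexist).

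The entire difficulty is concentrated in this last step: one must realize small certificate complexity, balance, transitivity, and an exponential number of relevant variables simultaneously while preserving monotonicity, and these pull against one another. The cross-intersection requirement that every size-$O(\sqrt d)$ minterm meet every size-$O(\sqrt d)$ maxterm is precisely what caps the number of usable variables at $\exp(\Theta(\sqrt d))$, and is the reason the bound is $\exp(\Omega(\sqrt d))$ rather than $\exp(\Omega(d))$. I expect the \emph{main obstacle} to be verifying that such a design exists — plausibly via a randomized or algebraic construction — and that the resulting function is genuinely balanced with all certificates short; everything else follows from the influence inequality above.
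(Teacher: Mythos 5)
Your inequality chain is correct: for a junta $g$ depending on a coordinate set $S$, one indeed has $\pr[f\neq g]\ge \E[\var(f\mid x_S)] = \var(f)-\sum_{\emptyset\neq T\subseteq S}\hat f(T)^2 \ge \var(f)-\sum_{i\in S}\Inf_i(f)$, and the decision-tree facts you invoke (pivotal implies queried, $D(f)\le C_0(f)C_1(f)$, and for monotone $f$ that $C_1$, $C_0$ are the maximum minterm and maxterm sizes) are all standard. But notice what the last relaxation costs: $\Inf_i(f)$ for $i\in S$ counts Fourier weight on \emph{all} sets containing $i$, not just those inside $S$, so your bound becomes vacuous whenever $S$ contains variables whose influence comes from interactions with variables outside $S$. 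The paper's own function exhibits this: its $d-1$ addressing variables $x_j$ each have influence $\Theta(1/\sqrt{d})$, so once $S$ contains them, $\sum_{i\in S}\Inf_i(f)=\Omega(\sqrt{d})\gg \var(f)$ and your bound says nothing --- even though such a junta still errs with constant probability, as the paper's Lemma~\ref{finalApproxLem} shows by \emph{conditioning} on all of $x$ and observing that the junta misses the relevant $y_i$. So your reduction cannot be fed any addressing-type construction; it genuinely requires a function whose influence profile is flat, i.e.\ whose top $\exp(\Omega(\sqrt{d}))$ influences sum to less than $\var(f)-\epsilon$.

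That is where the proposal breaks rather than merely stalls: the object you defer to the final step --- monotone, \emph{transitive} on its $n$ relevant variables, balanced, with all minterms and maxterms of size $O(\sqrt{d})$ and $n=\exp(\Omega(\sqrt{d}))$ --- does not exist. By a theorem of X.~Sun (TCS, 2007) on transitive (``weakly symmetric'') functions, any nonconstant function invariant under a group acting transitively on its $n$ relevant variables has block sensitivity $\Omega(n^{1/3})$; since $bs(f)\le D(f)\le C_0(f)C_1(f)=O(d)$ under your hypotheses, this forces $n=O(d^3)$. Then all influences, being equal to $I(f)/n\ge \var(f)/n$ by transitivity and Poincar\'e, are $\Omega(1/d^3)$, so your bound $\var(f)-k\max_i\Inf_i(f)$ already dies at $k=O(d^3)$: transitivity caps your method at a polynomial junta lower bound, exponentially short of the theorem. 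If you drop transitivity, what your inequality actually needs is a monotone depth-$d$ function with $\var(f)=\Omega(1)$ whose top $\exp(\Omega(\sqrt{d}))$ influences sum to at most $\var(f)-\epsilon$; you provide no such construction, and the requirement is delicate even in principle, since Friedgut's theorem \cite{juntaapprox} combined with the $I(f)=O(\sqrt{d})$ bound shows that any such $f$ that is not nearly constant must have some influence at least $\exp(-O(\sqrt{d}/\epsilon))$, so only a narrow range of exponential flatness is even conceivable. The paper avoids this entire difficulty because its Lemma~\ref{finalApproxLem} is a conditional statement --- the junta fails on the $y_i$ it does not contain even after being told all of $x$ --- which no accounting of individual influences can reproduce.
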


In fact it is known that the bound on $k$ in \expref{Theorem}{MainThm} is tight up to the constant in the exponent.  In particular, it is shown in \cite{learningmonotone} that any monotone function given by a depth-$d$ decision tree has total influence $I(f) = O(\sqrt{d})$.  We combine this with the main result of \cite{juntaapprox}, which says that any boolean function $f$ can be $\epsilon$-approximated by a $k$-junta for $k=\exp(O(I(f)/\epsilon)).$  Combining these results we find that:

\begin{observation}
If $f$ is a monotone function given by a depth-$d$ decision tree, and if $\epsilon>0$, then there is a $k$-junta $g$ that agrees with $f$ on all but an $\epsilon$ fraction of inputs for $k=\exp(O(\sqrt{d}/\epsilon))$.
\end{observation}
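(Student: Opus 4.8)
The plan is to derive this bound by simply composing the two quantitative results already cited above, so that essentially no new work is required beyond checking that their hypotheses match. First I would recall the total-influence bound from \cite{learningmonotone}: since $f$ is monotone and is computed by a depth-$d$ decision tree, its total influence satisfies $I(f) = O(\sqrt{d})$. This is the only place that the monotonicity and decision-tree structure of $f$ enter the argument, and it is worth emphasizing that the bound genuinely exploits monotonicity, as a general depth-$d$ decision tree can have total influence as large as $\Theta(d)$.

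Next I would feed this influence bound into the junta-approximation theorem of \cite{juntaapprox}, which applies to an arbitrary boolean function. That result guarantees, for any $\epsilon>0$, a $k$-junta $g$ agreeing with $f$ outside an $\epsilon$-fraction of inputs, with $k = \exp(O(I(f)/\epsilon))$. Substituting $I(f) = O(\sqrt{d})$ then yields $k = \exp(O(\sqrt{d}/\epsilon))$, which is exactly the claimed bound.

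The only point requiring care is the bookkeeping of the hidden constants: the $O(\sqrt{d})$ from the first step and the $O(\cdot/\epsilon)$ from the second step combine multiplicatively inside the exponent, so the composite constant in $\exp(O(\sqrt{d}/\epsilon))$ is just the product of the two implicit constants. Since there is no interaction between the two cited statements — one bounds a parameter of $f$, while the other is a black-box approximation guarantee phrased purely in terms of that parameter — there is no real obstacle to overcome here. The substance of the observation lies entirely in \cite{learningmonotone} and \cite{juntaapprox}, and its only purpose is to record that their conjunction certifies that the bound on $k$ in \expref{Theorem}{MainThm} is tight up to the constant in the exponent.
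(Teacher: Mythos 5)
Your proposal is correct and is exactly the paper's own argument: cite the $I(f) = O(\sqrt{d})$ influence bound for monotone depth-$d$ decision trees from \cite{learningmonotone}, then apply Friedgut's theorem from \cite{juntaapprox} to get a $k$-junta approximation with $k = \exp(O(I(f)/\epsilon)) = \exp(O(\sqrt{d}/\epsilon))$. The paper presents this composition in the paragraph preceding the Observation, just as you do.
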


The function we construct to show \expref{Theorem}{MainThm} will combine ideas from two previous constructions, the monotone addressing function and Talagrand's function.

The monotone addressing function is defined by
$$
f(x_1,\ldots,x_{d-1},y_0,\ldots,y_{2^{d-1}-1}) = \begin{cases} 1 &\textrm{ if } \sum x_i > \floor{(d-1)/2}
\\ y_{x_0\ldots x_{d-1}}&\textrm{ if } \sum x_i = \floor{(d-1)/2} \\ 0 &\textrm{ if } \sum x_i < \floor{(d-1)/2}
\end{cases}.
$$
This is an example of a monotone function given by a depth-$d$ decision tree that depends on exponentially many variables, and thus provides us with a good starting point. The monotone addressing function fails to provide a counter-example to \expref{Conjecture}{TVConj} though since it agrees with the majority function except on a set of measure $O(1/\sqrt{d})$.

Given the bound on the total sensitivity of a low-depth monotone function, we know that any $f$ satisfying the conditions of \expref{Theorem}{MainThm} must not only have near the maximum possible total influence for a low-depth monotone function, but also must not be approximable by a function with much lower total influence.  Because of this restriction, our construction will look somewhat similar to a construction of Talagrand in \cite{similarexample}.  In particular, Talagrand constructs a monotone function $f$ on $\{0,1\}^d$ so that on a constant fraction of inputs, $f$ has sensitivity (i.e. the number of coordinates such that changing the input at that coordinate would change the output of $f$) $\Omega(\sqrt{d})$.  Since, as is easily seen, the average sensitivity over all inputs is equal to the total influence, this is as large as possible.  On the other hand, this condition tells us that $f$ retains large average sensitivity even after ignoring any $\epsilon$-fraction of inputs for sufficiently small constant $\epsilon$.  Talagrand's function fails to provide a counter-example to \expref{Conjecture}{TVConj} on its own, because it is already a $d$-junta.

\section{The Construction}

In order to define the function $f$ with the properties specified by \expref{Theorem}{MainThm}, we first introduce some background notation.  We let $d,t$ and $m$ be integers with $t=\Theta(\sqrt{d})$ and $m=\Theta(2^t)$.  We furthermore assume that $2^{-t}m$ is sufficiently small given the value of $t /\sqrt{d}$.  We let $\s=(S_1,\ldots,S_m)$ be a random sequence of sets, where the $S_i$ are chosen independently and uniformly from the set of subsets of $\{1,2,\ldots,d-1\}$ of size exactly $t$.  Given this $\s$, we define the function $T_\s$ on $\{0,1\}^{d-1}$ as follows:
$$
T_{\s}(x_1,\ldots,x_{d-1}) = \{ 1\leq i \leq m: x_j = 1 \textrm{ for all } j \in S_i\}.
$$
We will hereafter abbreviate $T$ by suppressing the explicit dependence on $\s$, and abbreviate $(x_1,\ldots,x_{d-1})$ by $x$.

We finally define $f$ as
$$
f_{\s}(x_1,\ldots,x_{d-1},y_1,\ldots,y_m) = \begin{cases} 1 & \textrm{if } |T(x)|\geq 2 \\ 0 & \textrm{if } |T(x)|=0 \\ y_i & \textrm{if } T(x)=\{i\} \end{cases}.
$$
Again, we will often suppress the dependence of $f$ on $\s$.  It is clear that $f$ is monotone.  Furthermore, $f$ is given by a depth-$d$ decision tree, since after fixing the values of the $x_i$, the value of $f$ depends on at most one more coordinate.  In the next Section, we show that $f$ cannot be approximated by any $k$-junta for small $k$.

Note that Talagrand's function is given (for appropriately chosen $\s$) by
$$
G(x_1,\ldots,x_{d-1}) = \begin{cases} 1 & \textrm{if } |T(x)|\geq 1 \\ 0 & \textrm{if } |T(x)|=0 \end{cases}.
$$

\section{Approximation Bounds}

\expref{Theorem}{MainThm} will follow from the following Proposition:

\begin{proposition}\label{mainProp}
There exists an $\epsilon>0$ so that for $f_{\s}$ defined as above, with constant probability over the choice of $\s$, $f$ is not $\epsilon$-approximated by any $k$-junta for $k=o(2^t)$.
\end{proposition}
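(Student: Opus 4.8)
The plan is to exploit the fact that on every input $x$ with $|T(x)|=1$, say $T(x)=\{i\}$, the function $f_{\s}$ is \emph{exactly} equal to the address bit $y_i$, and is therefore genuinely sensitive to $y_i$ there. There are $m=\Theta(2^t)$ address variables, but a $k$-junta with $k=o(2^t)$ can read only $o(m)$ of them, so almost all of the $y_i$ are invisible to any such junta; on the inputs where an invisible $y_i$ is the deciding bit the junta is forced to guess and errs half the time. The argument then reduces to showing that, for a constant fraction of choices of $\s$, the region $\{x:|T(x)|=1\}$ has constant measure, so that these unavoidable errors accumulate to a constant fraction of all inputs.

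First I would fix $\s$ and an arbitrary $k$-junta $g$, and let $J_y$ be the set of address variables among the at most $k$ coordinates that $g$ reads, so $|J_y|\le k$. For each $i\notin J_y$ consider the region $A_i=\{(x,y):T(x)=\{i\}\}$. Holding $x$ (with $T(x)=\{i\}$) and all $y_j$, $j\ne i$, fixed while flipping $y_i$ changes $f_{\s}$ from $0$ to $1$ but leaves $g$ unchanged, so $g$ disagrees with $f_{\s}$ on exactly one of the two points and $\pr[g\ne f_{\s}\mid A_i]=\tfrac12$. The $A_i$ are disjoint over $i$, so summing gives $\pr[g\ne f_{\s}]\ge \tfrac12\sum_{i\notin J_y}\pr_x[T(x)=\{i\}]=\tfrac12\big(\pr_x[|T(x)|=1]-\sum_{i\in J_y}\pr_x[T(x)=\{i\}]\big)$. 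The key point is the deterministic identity $\pr_x[i\in T(x)]=2^{-t}$, valid for every $i$ and every $\s$ (it is the probability that $t$ fixed coordinates of a uniform $x$ are all $1$), so the subtracted term is at most $|J_y|2^{-t}\le k2^{-t}=o(1)$. This bounds the help the junta gets from its $y$-coordinates uniformly, independently of how cleverly they are chosen, and reduces the problem to lower bounding $\pr_x[|T(x)|=1]$.

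Next I would show $\E_{\s}\,\pr_x[|T(x)|=1]=\Omega(1)$ by a first/second moment computation. Writing $\lambda=m2^{-t}=\Theta(1)$, linearity gives $\E_{\s,x}|T(x)|=m2^{-t}=\lambda$, while the second factorial moment is $\E_{\s,x}|T(x)|(|T(x)|-1)=m(m-1)\,\E_{\s}\pr_x[i,j\in T]$ for $i\ne j$. Here $\pr_x[i,j\in T]=2^{-|S_i\cup S_j|}=2^{-2t}2^{|S_i\cap S_j|}$, and since $S_i,S_j$ are random $t$-subsets of $\{1,\dots,d-1\}$ with $t=\Theta(\sqrt d)$, the overlap $|S_i\cap S_j|$ has constant mean and $\E_{\s}2^{|S_i\cap S_j|}=e^{O(t^2/d)}=O(1)$. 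Thus the factorial moment is of order $\lambda^2 e^{O(t^2/d)}$, and the hypothesis that $2^{-t}m=\lambda$ is sufficiently small given $t/\sqrt d$ is exactly what forces this below $\lambda/2$. Since $\pr_x[|T|=1]\ge \E_x|T|-\E_x|T|(|T|-1)$ pointwise in $\s$, taking expectations gives $\E_{\s}\pr_x[|T(x)|=1]\ge \lambda-\lambda/2=\lambda/2$; as $\pr_x[|T(x)|=1]\le 1$, a reverse-Markov argument then yields $\pr_x[|T(x)|=1]\ge\lambda/4$ for at least a $\lambda/4$-fraction of $\s$. Combined with the previous paragraph, for such $\s$ every $k$-junta disagrees with $f_{\s}$ on at least $\tfrac12(\lambda/4-o(1))\ge\lambda/16$ of inputs, proving the Proposition with $\epsilon=\lambda/16$.

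The one genuinely delicate step is the second-moment estimate, specifically controlling $\E_{\s}2^{|S_i\cap S_j|}$. The collision mass $\pr_x[|T|\ge 2]$ is inflated by precisely the factor $e^{\Theta(t^2/d)}$ coming from the typical overlap of two random $t$-sets, and this is why the construction needs $t=\Theta(\sqrt d)$ (so the overlap, and hence the inflation, stays $O(1)$) together with $\lambda$ small (so that even the inflated collision mass cannot swallow the singleton mass). Everything else—the pairing argument and the exact identity $\pr_x[i\in T(x)]=2^{-t}$—is elementary; the quantitative heart of the Proposition is balancing these two competing effects.
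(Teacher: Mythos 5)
Your proposal is correct and follows essentially the same route as the paper: the same junta-error bound via the identity $\pr_x(i\in T(x))=2^{-t}$ and the "invisible address bit forces error $1/2$" pairing (the paper's Lemma \ref{finalApproxLem}), combined with the same first/second factorial moment bound $\pr_x(|T|=1)\geq \E[|T|]-\E[|T|(|T|-1)]$ and the estimate $\E[|T|(|T|-1)]\leq (2^{-t}m)^2\exp(O(t^2/d))$ (the paper's Lemma \ref{TSizeLem}, which derives the overlap factor by sequential conditioning rather than via $\E_{\s}2^{|S_i\cap S_j|}$, an equivalent computation). You even make explicit the reverse-Markov step over $\s$ that the paper leaves implicit.
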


A key step in our proof will be to show that with constant probability $f$ actually depends on one of the $y_i$.

\begin{lemma}\label{TSizeLem}
With $T$ as above,
$$
\pr_{\s,x}(|T_{\s}(x)| = 1) = \Omega(1).
$$
\end{lemma}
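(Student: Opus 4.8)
The plan is to condition on the input $x$ and exploit the fact that, once $x$ is fixed, the membership events $\{i \in T(x)\}$ depend on disjoint pieces of randomness, namely the independent sets $S_i$. Concretely, for a fixed $x$ of Hamming weight $w = |\supp(x)|$, the event $i \in T(x)$ is exactly the event $S_i \subseteq \supp(x)$, which occurs with probability
$$
p_w := \frac{\binom{w}{t}}{\binom{d-1}{t}},
$$
independently for each $i$. Hence, conditioned on $x$, the quantity $|T(x)|$ is distributed as a sum of $m$ independent Bernoulli random variables with success probability $p_w$, i.e.\ as $\mathrm{Binomial}(m, p_w)$, and so
$$
\pr_{\s}(|T(x)| = 1 \mid x) = m\, p_w (1 - p_w)^{m-1}.
$$
It then suffices to lower bound this quantity by a positive constant for a constant fraction of the (uniformly random) $x$.

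The heart of the argument is to show that for a constant fraction of $x$ the parameter $\lambda_w := m\, p_w$ is bounded above and below by positive constants. First I would restrict attention to the typical weights: by a standard binomial tail (or central-limit) estimate, a constant fraction of $x \in \{0,1\}^{d-1}$ have weight $w = (d-1)/2 + O(\sqrt{d})$. For such $w$ I would estimate $p_w = \prod_{j=0}^{t-1}\frac{w-j}{d-1-j}$ by writing each factor as $\frac12\left(1 + \frac{2w - (d-1) - j}{d-1-j}\right)$ and passing to logarithms. Each of the $t$ correction terms is $O(1/\sqrt d)$, and their sum is dominated by $-\frac{t^2}{2(d-1)} = -\Theta(1)$ together with a weight-dependent term of size $O(1)$; hence
$$
p_w = 2^{-t}\exp(\Theta(1)) = \Theta(2^{-t}).
$$
Combining this with $m = \Theta(2^t)$ yields $\lambda_w = m\, p_w = \Theta(1)$ for every $x$ in the typical weight range.

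Finally, since $p_w = \Theta(2^{-t}) = o(1)$ while $m\, p_w = \lambda_w = \Theta(1)$, the Poisson approximation to the binomial gives $(1 - p_w)^{m-1} = \exp(-\lambda_w(1 + o(1)))$, so
$$
\pr_{\s}(|T(x)| = 1 \mid x) = \lambda_w e^{-\lambda_w}(1 + o(1)) = \Omega(1)
$$
for every $x$ of typical weight. Averaging over the constant fraction of such $x$ then gives $\pr_{\s,x}(|T_\s(x)| = 1) = \Omega(1)$, as claimed. I expect the main obstacle to be the second step: controlling $p_w$ tightly enough to pin $\lambda_w$ between two positive constants. The exponent in the estimate for $p_w$ is a sum of $t = \Theta(\sqrt d)$ terms each of size $\Theta(1/\sqrt d)$, so one must verify that the cumulative correction is genuinely $\Theta(1)$ rather than growing, which is precisely where the hypotheses $t = \Theta(\sqrt d)$, $m = \Theta(2^t)$, and the assumption that $2^{-t} m$ is a suitably small constant enter, ensuring $\lambda_w$ lands in the right constant window over the whole typical weight band.
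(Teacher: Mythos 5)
Your proposal is correct, but it follows a genuinely different route from the paper's. The paper never conditions on $x$: it runs a second-moment (Bonferroni-type) argument over the joint randomness of $(\s,x)$, proving $\E\left[|T(x)|(2-|T(x)|)\right]=\Omega(1)$ --- which suffices since that random variable is positive only when $|T(x)|=1$ --- by computing $\E[|T(x)|]=2^{-t}m$ exactly and bounding the pair term $\sum_{i\neq j}\pr(i,j\in T(x))$ via the conditional estimate $\pr(j\in T(x)\mid i\in T(x))=2^{-t}\exp(O(t^2/d))$; this is where the paper's hypothesis that $2^{-t}m$ be sufficiently small enters, so that the second moment does not swallow the first. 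You instead condition on $x$, observe that the events $S_i\subseteq\supp(x)$ are exactly independent across $i$ (the dependence the paper must fight comes only from the shared $x$), so that $|T(x)|$ is exactly $\mathrm{Binomial}(m,p_w)$, and then show $p_w=\Theta(2^{-t})$ uniformly over a typical weight band. Your route buys a sharper conclusion, $\pr_{\s}(|T(x)|=1\mid x)=\lambda_w e^{-\lambda_w}(1+o(1))$, and it dispenses with the smallness requirement on $2^{-t}m$: any constant value of $m2^{-t}$ gives $\lambda_w=\Theta(1)$ and hence $\lambda_w e^{-\lambda_w}=\Omega(1)$, whereas the paper needs $2^{-t}m\leq \exp(-O(t^2/d))/2$. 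The cost is the uniform hypergeometric estimate $p_w=\prod_{j=0}^{t-1}(w-j)/(d-1-j)=2^{-t}\exp(O(1))$ over the whole weight window; its computational core (a product of $t$ factors of the form $1/2+O(t/d)$) closely parallels the paper's pair-probability estimate, just deployed against weights rather than against pairs of sets. Your own identification of this as the main obstacle is apt, and your sketch of it is sound: with $\delta=2w-(d-1)=O(\sqrt{d})$, the exponent is $t\delta/(d-1)-t^2/(2(d-1))+o(1)=O(1)$, uniformly over the typical band, which pins $\lambda_w$ between positive constants as required.
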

\begin{proof}
We will show the further claim that
\begin{equation}\label{TBoundEqn}
\E\left[|T_{\s}(x)|(2-|T_{\s}(x)|) \right] = \Omega(1).
\end{equation}
Since the term in the expectation is positive only if $|T|=1$, this will complete our proof.  We note that
\begin{align*}
\E\left[|T_{\s}(x)|\right] &= \sum_{i=1}^m \pr(i \in T_{\s}(x))\\
& = \sum_{i=1}^m \pr(x_j=1 \textrm{ for all } j\in S_i) \\
& = m 2^{-t}.
\end{align*}
On the other hand, we have that
\begin{align*}
\E\left[|T_{\s}(x)|(|T_{\s}(x)|-1) \right] & = \sum_{i\neq j} \pr(i,j \in T_{\s}(x))\\
& = \sum_{i\neq j} \pr(i\in T_{\s}(x)) \pr(j\in T_{\s}(x)| i \in T_{\s}(x))\\
& = \sum_{i\neq j} 2^{-t} \pr(x_\ell =1 \textrm{ for all } \ell\in S_j | x_\ell =1 \textrm{ for all } \ell\in S_i ).
\end{align*}
To compute this conditional probability we let $S_j=\{a_1,\ldots,a_t\}$ where the $a_i$ are picked randomly from $\{1,2,\ldots,d-1\}$ without replacement.  We compute it as the product
$$
\prod_{k=1}^t \pr(x_{a_k}=1 | x_{a_1}=\ldots=x_{a_{k-1}}=1 \textrm{ and }x_\ell =1 \textrm{ for all } \ell\in S_i ).
$$
These probabilities are approximated by first fixing the values of $S_i$ and $a_1,\ldots,a_{k-1}$.  After additionally fixing the value of $a_k$, the probability in question becomes $1$ if $a_k\in S_i$ and $1/2$ otherwise.  Thus the probability that $x_{a_r}=1$ is
$$
(1+\pr(a_r\in S_i))/2 = \left( 1 + \frac{|S_i \backslash \{a_1,\ldots,a_{r-1}\}|}{d-r}\right)/2 = 1/2 + O(t/d).
$$
Hence the probability that $j\in T_{\s}(x)$ given that $i\in T_{\s}(x)$ is
$$
(1/2+O(t/d))^t = 2^{-t} \exp(O(t^2/d)).
$$
Therefore, we have that
$$
\E\left[|T_{\s}(x)|(|T_{\s}(x)|-1) \right] = \sum_{i\neq j} 2^{-2t}\exp(O(t^2/d)) \leq (2^{-t}m)^2\exp(O(t^2/d)).
$$
Therefore, we have that
\begin{align*}
\E\left[|T_{\s}(x)|(2-|T_{\s}(x)|) \right] &= \E\left[|T_\s(x)|\right]-\E\left[|T_\s(x)|(|T_\s(x)|-1)\right]\\
&\geq (2^{-t}m) - (2^{-t}m)^2\exp(O(t^2/d))\\
&= (2^{-t}m)\left(1-(2^{-t}m)\exp(O(t^2/d))\right).
\end{align*}
As long as $2^{-t}m$ is bounded below by a constant and above by $\exp(-O(t^2/d))/2$, this is $\Omega(1)$.
\end{proof}

We are now ready to prove \expref{Proposition}{mainProp}.  By \expref{Lemma}{TSizeLem}, we note that with constant probability over $\s$, that $\pr_x(|T(x)|=1) = \Omega(1)$.  For such $\s$, we claim that $f$ has the desired property.  In particular we claim the following:

\begin{lemma}\label{finalApproxLem}
If $f$ is as above and $g$ is a $k$-junta, then
$$
\pr(f(x,y)\neq g(x,y)) \geq \frac{\pr_x(|T(x)|=1) - k 2^{-t}}{2}.
$$
\end{lemma}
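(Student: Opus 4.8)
The plan is to exploit the fact that on any input $x$ with $|T(x)| = 1$, say $T(x) = \{i\}$, the function $f$ reads off the single bit $y_i$ exactly, whereas a $k$-junta $g$ can only track this behavior if it happens to depend on $y_i$. Since $g$ depends on at most $k$ coordinates in total, it can be ``watching'' only a small number of the $y_i$, and on the remaining inputs it must commit to a value without reading an ignored bit, so it is wrong half the time. I would therefore decompose $\pr(f \neq g)$ according to whether the relevant $y$-coordinate is one that $g$ depends on.

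Concretely, let $J$ denote the set of coordinates on which $g$ depends, and let $J_y \subseteq \{1,\ldots,m\}$ be the set of indices $i$ for which $y_i \in J$; note $|J_y| \le k$. First I would establish the pairing estimate: fix any $x$ with $|T(x)| = 1$ whose unique index $i(x) = i$ satisfies $i \notin J_y$, and fix all coordinates of $y$ except $y_i$. Then $g(x,y)$ is constant as $y_i$ ranges over $\{0,1\}$ (since $g$ ignores $y_i$), while $f(x,y) = y_i$ takes both values. Hence exactly one of the two settings of $y_i$ gives $f \neq g$, so conditioned on this event the disagreement probability is exactly $1/2$, and averaging over the remaining $y$-coordinates preserves this.

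Second, I would bound the measure of the ``bad'' set of $x$, namely those with $|T(x)|=1$ for which $g$ happens to depend on the relevant coordinate. For any fixed index $i$ we have $\pr_x(i \in T(x)) = 2^{-t}$, since membership requires $x_j = 1$ for all $t$ coordinates $j \in S_i$. A union bound over $i \in J_y$ then gives
\begin{equation*}
\pr_x\!\left(|T(x)| = 1 \text{ and } i(x) \in J_y\right) \le \sum_{i \in J_y} \pr_x(i \in T(x)) = |J_y|\, 2^{-t} \le k\, 2^{-t}.
\end{equation*}
Subtracting, the set of $x$ with $|T(x)| = 1$ and $i(x) \notin J_y$ has measure at least $\pr_x(|T(x)| = 1) - k\,2^{-t}$. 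Restricting the disagreement probability to this good set and applying the pointwise $1/2$ bound from the previous step then yields the claimed inequality.

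I expect the only point requiring care to be the clean separation of the junta's dependence on the $x$-coordinates from its dependence on the $y$-coordinates: the pairing argument must fix all of $x$ (resolving any $x$-dependence of $g$ and determining $i(x)$) and flip only the single bit $y_{i(x)}$, so that the comparison is genuinely between a constant value of $g$ and the two values taken by $f$. Given that the good-$x$ event depends only on $x$ and $y$ is independent and uniform, the factorization $\pr(f \neq g) \ge \pr_x(\text{good } x)\cdot \tfrac{1}{2}$ is then immediate, and there is no further obstacle.
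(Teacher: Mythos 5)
Your proposal is correct and follows essentially the same argument as the paper: the same pairing observation that $g$ is constant while $f = y_i$ flips when $g$ ignores $y_{i(x)}$, and the same union bound $\pr_x(T(x)=\{i\}) \leq \pr_x(i \in T(x)) = 2^{-t}$ summed over the at most $k$ watched indices. Your explicit attention to fixing all of $x$ before flipping $y_{i(x)}$ is exactly the conditioning the paper performs implicitly.
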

\begin{proof}
This follows from the simple observation that if, after fixing the value of $x$, we have that $T=\{i\}$ where $g$ does not depend on $y_i$, then $\pr_y(f(x,y)\neq g(x,y)) = 1/2$.  This is because after further conditioning on the values of all $y_j$ for $j\neq i$, $g$ becomes a constant function (by assumption) and $f$ takes the values $0$ and $1$ each with probability $1/2$.  Therefore we have that
\begin{align*}
\pr & (f(x,y)\neq g(x,y))\\ & \geq \frac{\pr(T(x)=\{i\}\textrm{ for some } i, \textrm{ and } g \textrm{ does not depend on }y_i)}{2} \\
& = \frac{\pr(|T(x)|=1) - \pr(T(x)=\{i\}\textrm{ for some }i, \textrm{ and } g \textrm{ depends on }y_i)}{2}\\
& = \frac{\pr(|T(x)|=1) - \sum_{i:g\textrm{ depends on }y_i} \pr(T(x) = \{i\})}{2}\\
& \geq \frac{\pr(|T(x)|=1) - \sum_{i:g\textrm{ depends on }y_i} \pr(i \in T(x)) }{2}\\
& = \frac{\pr(|T(x)|=1) - \sum_{i:g\textrm{ depends on }y_i} 2^{-t} }{2}\\
& \geq \frac{\pr_x(|T(x)|=1) - k 2^{-t}}{2}.
\end{align*}
\end{proof}

Proposition \ref{mainProp} and Theorem \ref{MainThm} now follow immediately.

\section*{Acknowledgements}

I would like to thank Ryan O'Donnell for making me aware of this problem, and for his help with finding appropriate references for this paper.  This research was done with the support of an NSF postdoctoral fellowship.

\end{document}